\documentclass[letterpaper, 10 pt, conference]{ieeeconf}

\IEEEoverridecommandlockouts                              

\usepackage{changes}
\usepackage{authblk}
\usepackage{amsfonts}

\usepackage{amsthm}
\usepackage{amsmath}
\usepackage{amssymb}
\usepackage{hyperref,cite}
\usepackage{mathtools, nccmath}
\usepackage{cleveref}
\usepackage{color}
\usepackage{algorithm}
\usepackage{subcaption}
\usepackage{algorithmic}
\let\labelindent\relax
\usepackage{enumitem}
\usepackage{float}
\usepackage{graphicx}
\usepackage{booktabs}
\usepackage{threeparttable}
\usepackage{hhline}
\newtheorem{theorem}{Theorem}

\newtheorem{definition}{Definition}
\newtheorem{remark}{Remark}
\newtheorem{problem}{Problem}

\newcommand{\rr}{{\mathbb R}}
\newcommand{\ba}[1]{\begin{array}{#1}}
\newcommand{\ea}{\end{array}}

\begin{document}
\title{\LARGE \bf
Koopman-Based Linear MPC for Safe Control \\ using Control Barrier Functions}
\author{Shuo Liu$^{1,*}$, Liang Wu$^{2,*}$, Dawei Zhang$^{1}$, J\'an Drgo\v na$^2$ and Calin A. Belta$^{3}$%
\thanks{This work was supported in part by the NSF under grant IIS-2024606 at Boston University and by a Brendan Iribe endowed professorship at the University of Maryland. 
This research was also supported by the U.S. DOE, Office of Science, ASCR program under the Scientific Discovery through Advanced Computing (SciDAC) Institute “LEADS: LEarning-Accelerated Domain Science”.  }
\thanks{$^{1}$S. Liu and D. Zhang are with the Department of Mechanical Engineering, Boston
University, Brookline, MA, USA {\tt\small \{liushuo, dwzhang\}@bu.edu}}
\thanks{$^{2}$L. Wu and J. Drgona are with the Department of Civil and Systems Engineering, Johns Hopkins University, Baltimore, MD, USA 
        {\tt\small \{wliang14, jdrgona1\}@jh.edu}}
\thanks{$^{3}$C. Belta is with the Department of Electrical and Computer Engineering and with the Department of Computer Science, University of Maryland, College Park, MD, USA 
        {\tt\small calin@umd.edu}}%
\thanks{$^{*}$These authors contributed equally to this work.}%
}

\maketitle

\begin{abstract}
This paper proposes a Koopman-based linear model predictive control (LMPC) framework for safety-critical control of nonlinear discrete-time systems. Existing MPC formulations based on discrete-time control barrier functions (DCBFs) enforce safety through barrier constraints but typically result in computationally demanding nonlinear programming. To address this challenge, we construct a DCBF-augmented dynamical system and employ Koopman operator theory to lift the nonlinear dynamics into a higher-dimensional space where both the system dynamics and the barrier function admit a linear predictor representation. This enables the transformation of the nonlinear safety-constrained MPC problem into a quadratic program (QP). To improve feasibility while preserving safety, a relaxation mechanism with slack variables is introduced for the barrier constraints. The resulting approach combines the modeling capability of Koopman operators with the computational efficiency of QP. Numerical simulations on a navigation task for a robot with nonlinear dynamics demonstrate that the proposed framework achieves safe trajectory generation and efficient real-time control.
\end{abstract}


\section{Introduction}
Ensuring safety is a fundamental requirement in many autonomous systems such as autonomous cars, robots, and aerial vehicles. These systems must operate in complex environments while satisfying safety requirements such as collision avoidance and input limits. Control Barrier Functions (CBFs) \cite{ames2014control, ames2016control} have emerged as an effective tool for enforcing safety requirements in dynamical systems. By constructing a barrier function whose superlevel set defines the safe region, CBFs guarantee forward invariance of the safe set by imposing affine constraints on the control input. Moreover, stabilization and safety can be addressed simultaneously by combining CBFs with Control Lyapunov Functions (CLFs) through quadratic programs (QPs). Extensions of this framework have been developed to handle high-relative-degree constraints and adaptive control scenarios \cite{nguyen2016exponential, xiao2021high, xiao2021adaptive, liu2023auxiliary, liu2024auxiliary}. For discrete-time systems, discrete-time CBFs (DCBFs) were introduced in \cite{agrawal2017discrete}.

The aforementioned approaches are, in general, shortsighted and aggressive due to the lack of predicting ahead. Model predictive control (MPC) with DCBFs \cite{zeng2021enhancing} addresses safety in the discrete-time domain and produces smooth control actions by incorporating future state information over a receding horizon. However, the computational cost grows significantly with the horizon length, as the resulting optimization is typically nonlinear and nonconvex. To improve efficiency and handle high relative-degree constraints, discrete-time high-order CBFs (DHOCBFs) were introduced within an iterative MPC framework \cite{liu2023iterative}. However, the framework still requires successive linearizations of the nonlinear dynamics and CBF constraints, and the convergence of the resulting iterations is not always guaranteed.

An alternative approach is provided by the Koopman operator framework \cite{rowley2009spectral,budivsic2012applied}, which represents nonlinear dynamics as linear systems in a lifted state space. Data-driven methods such as Dynamic Mode Decomposition (DMD) and extended DMD (EDMD) enable efficient finite-dimensional approximations of the Koopman operator \cite{schmid2010dynamic, williams2015data}. Recently, Koopman-based methods have also been explored in CBF-based safety control. For example, \cite{folkestad2020data} leverages Koopman operators to propagate nonlinear dynamics for efficient safety set computation, \cite{zinage2023neural} learns Koopman model and corresponding CBFs for unknown nonlinear systems using neural networks, and \cite{black2023safe} develops a Koopman-based identification scheme with fixed-time convergence for safe control of uncertain nonlinear systems. Despite these advances, existing Koopman-CBF approaches do not incorporate prediction over a receding horizon. Integrating Koopman models with MPC enables such prediction capability \cite{korda2018linear}. However, to maintain convexity in the resulting optimization problem, the CBF constraints still require additional linearization.

 In this paper, we propose a Koopman-based MPC-DCBF framework that enables computationally efficient safety-critical control for nonlinear systems. To the best of our knowledge, this work is the first to integrate Koopman-based modeling, MPC prediction, and CBF safety constraints within a unified MPC framework. In particular, the contributions are as follows:

\begin{itemize}
    \item We propose a Koopman-based MPC-DCBF framework for safety-critical control of nonlinear systems by lifting a DCBF-augmented dynamical system into a higher-dimensional linear predictor representation, leading to a linear MPC formulation that can be efficiently solved as a convex quadratic program.

    \item We introduce a relaxation mechanism with slack variables for the DCBF constraints, preserving safety guarantees while improving feasibility and maintaining the quadratic programming structure of the MPC problem.

    \item Numerical simulations on a nonlinear mobile robot navigation task demonstrate that the proposed approach achieves safe trajectory generation while significantly reducing computational cost compared with existing NMPC-DCBF and iterative MPC-DCBF methods.
\end{itemize}

The remainder of the article is organized as follows. In
Sec. \ref{sec: preliminaries}, we provide definitions and preliminaries. We formulate the problem and outline our approach in Sec. \ref{sec: problem and approach}. The proposed Koopman-based MPC-DCBF framework is introduced in Sec. \ref{sec: LMPC-DCBF} followed by
simulations in Sec. \ref{sec: case studies}. We conclude the paper and discuss directions for future work in Sec. \ref{sec:conclusion}.

\section{Preliminaries}
\label{sec: preliminaries}
\subsection{Koopman Operator}\label{subsec: EDMD}



Koopman theory \cite{koopman1931hamiltonian,koopman1932dynamical} provides an operator-theoretic framework that lifts an autonomous nonlinear system to a linear but infinite-dimensional representation. Extensions to controlled systems \cite{williams2016extending,proctor2018generalizing,korda2018linear} consider
\begin{equation}\label{eq: sys}
\boldsymbol{x}_{t+1}=f(\boldsymbol{x}_t,\boldsymbol{u}_t),
\end{equation}
where $\boldsymbol{x}_t \in \mathcal{X} \subset \mathbb{R}^n$ is the system state at time step $t \in \mathbb{N}$, $\boldsymbol{u}_t \in \mathcal{U} \subset \mathbb{R}^q$ is the control input, and $f: \mathbb{R}^n \times \mathbb{R}^q \to \mathbb{R}^n$ denotes the system dynamics.
A common approach to generalizing the Koopman operator to controlled systems 
\cite{korda2018linear} introduces an extended state 
$\boldsymbol{\xi}_t = \begin{bmatrix} \boldsymbol{x}_t \\ \mathbf{u}_t \end{bmatrix}$, 
where $\boldsymbol{x}_t \in \mathcal{X}$ denotes the state at time $t$ and 
$\mathbf{u}_t = \{\boldsymbol{u}_{t+k}\}_{k=0}^{\infty} \in \ell(\mathcal{U})$ 
is the sequence of future inputs starting from time $t$, with $\boldsymbol{u}_{t+k} \in \mathcal{U}$, and $\ell(\mathcal{U})$ denotes the space of all such admissible input sequences. 
The dynamics of the extended state $\boldsymbol{\xi}_t$ is described by
$\boldsymbol{\xi}_{t+1}=f_{\boldsymbol{\xi}}(\boldsymbol{\xi}_t) = 
\begin{bmatrix} f(\boldsymbol{x}_t, \boldsymbol{u}_t) \\ \boldsymbol{S}\mathbf{u}_t \end{bmatrix}$,
where $\boldsymbol{S}$ is the left-shift operator defined by 
$\boldsymbol{S}\mathbf{u}_{t} = \mathbf{u}_{t+1}$. 
The Koopman operator associated with the extended dynamics is defined on observables 
$\phi(\boldsymbol{\xi}_t)$ as 
$\mathcal{K}\phi(\boldsymbol{\xi}_t) \coloneqq \phi\big(f_{\boldsymbol{\xi}}(\boldsymbol{\xi}_t)\big)$.

The infinite-dimensional Koopman operator must be approximated in practice by a finite-dimensional representation. Several approaches have been proposed to obtain such approximations (see, e.g., \cite{williams2015data,williams2016extending,korda2018linear}), among which the data-driven Extended Dynamic Mode Decomposition (EDMD) is widely used. In EDMD, the set of observables is constructed through a ``lifted'' mapping 
$\phi(\boldsymbol{x}_{t},\boldsymbol{u}_{t}) = 
\left[\begin{array}{@{}c@{}}
        \psi(\boldsymbol{x}_{t}) \\
        \boldsymbol{u}_{t}
\end{array}\right]\!$, 
where $\psi:\mathcal{X}\to\mathbb{R}^{n_\psi}$ is a vector of observable (lifting) functions defined as 
$\psi(\boldsymbol{x}_t)\coloneqq\left[\psi_1(\boldsymbol{x}_t), \dots, \psi_{n_\psi}(\boldsymbol{x}_t)\right]^\top$, 
$n_\psi \gg n$ is the number of observables, 
and $\boldsymbol{u}_t \in \mathcal{U}$ denotes the input at time $t$. The functions in $\psi(\cdot)$ are typically chosen from a predefined set of basis functions, 
such as the radial basis functions used in \cite{korda2018linear}. A finite-dimensional approximation of the Koopman operator is then identified via an optimization procedure. Specifically, the identification of the approximate Koopman operator reduces to a least-squares problem, assuming that sampled data $\{(\boldsymbol{x}_{t}^j, \boldsymbol{u}_{t}^j),(\boldsymbol{x}_{t+1}^{j}, \boldsymbol{u}_{t+1}^{j})\}_{j=1}^{N_d}$ are collected 
according to the 
update mapping $\left[\begin{array}{@{}c@{}}
    \boldsymbol{x}_{t+1}^{j} \\
     \mathbf{u}_{t+1}^{j}
\end{array}\right]\! =\! \left[\begin{array}{@{}c@{}}
     f(\boldsymbol{x}_{t}^{j},\boldsymbol{u}_{t}^{j})  \\
     \boldsymbol{S} \mathbf{u}_{t}^{j}
\end{array}\right]$, where $j$ indexes the data samples and $t+1$ denotes the next time step, with each sample corresponding to a (possibly different) time index $t$. Although $\mathbf{u}_t$ is defined as a sequence, only its first element $\boldsymbol{u}_t$ is used in the data \cite{korda2018linear}. The approximate Koopman operator $\mathcal{A}$ is obtained by solving
\begin{equation}\label{eq: problem_EDMD_original} \min_{\mathcal{A}}\sum_{j=1}^{N_d}\|\phi(\boldsymbol{x}_{t+1}^{j},\boldsymbol{u}_{t+1}^{j})-\mathcal{A}\phi(\boldsymbol{x}_{t}^{j}, \boldsymbol{u}_{t}^{j})\|^2. 
\end{equation}
Since future control inputs do not need to be predicted, the last $q$ rows of $\mathcal{A}$ can be removed. Let $\bar{\mathcal{A}}$ denote the remaining part of $\mathcal{A}$ after removing the rows associated with the future control input. The matrix $\bar{\mathcal{A}}$ can then be decomposed as $\bar{\mathcal{A}}=[A\; B]$, where $A\in\rr^{n_\psi\times n_\psi}$ and $B\in\rr^{n_\psi\times q}$.
Problem \eqref{eq: problem_EDMD_original} can be reduced to solving $\min_{A,B} \sum_{j=1}^{N_d}\|\psi(\boldsymbol{x}_{t+1}^{j})-A\psi(\boldsymbol{x}_{t}^{j})-B\boldsymbol{u}_{t}^{j}\|^2$.
Additionally, the output matrix $C\in\rr^{n\times n_\psi}$ is obtained as the best projection of $\boldsymbol{x}$ onto the span of $\psi$ in a least-squares sense, i.e., as the solution to
\begin{equation}\label{eq: problem_EDMD_C}
     \min_C \sum_{j=1}^{N_d}\|\boldsymbol{x}_{t}^j-C\psi(\boldsymbol{x}_{t}^j)\|^2.
\end{equation}
At the end, a linear model for $\boldsymbol{x}$ can be formulated using
\begin{equation}\label{eq: out_Koopman_linear}
\psi_{t+1} = A \psi_t + B \boldsymbol{u}_t,~ \boldsymbol{x}_{t} \approx C \psi_{t},
\end{equation}
where $\psi_t\in\mathbb{R}^{n_\psi}$ denotes the lifted state propagated by the linear model, with $\psi_0=\psi(\boldsymbol{x}_0)$.
\subsection{Discrete-Time Control Barrier Functions (DCBFs)}
We consider a discrete-time control system \eqref{eq: sys} where function $f: \mathbb{R}^n \times \mathbb{R}^q \to \mathbb{R}^n$ is assumed to be locally Lipschitz continuous. Safety is defined in terms of forward invariance of a set $\mathcal{C}$. Specifically, system \eqref{eq: sys} is considered safe under an admissible control policy if, for any initial condition $\boldsymbol{x}_0 \in \mathcal{C}$, the state satisfies $\boldsymbol{x}_t \in \mathcal{C}$ for all $t \in \mathbb{N}$. We define $\mathcal{C}$ as the superlevel set of a continuous function $h: \mathbb{R}^n \to \mathbb{R}$:
\begin{equation}
\label{eq:safe-set}
\mathcal{C} \coloneqq \{ \boldsymbol{x} \in \mathbb{R}^n : h(\boldsymbol{x}) \geq 0 \}.
\end{equation}
\begin{definition}[DCBF~\cite{xiong2022discrete}]
\label{def:DCBF}
 Let $\mathcal{C}$ be defined by \eqref{eq:safe-set}. A continuous function $h:\mathbb{R}^{n}\to\mathbb{R}$ is a Discrete-time Control Barrier Function (DCBF) for system \eqref{eq: sys} if there exist class $\mathcal{K}$ function $\alpha$ satisfying $\alpha(r) < r$ for all $r > 0$ such that
 \begin{equation}
\Delta h(\boldsymbol{x}_{t},\boldsymbol{u}_{t}) \ge -\alpha(h(\boldsymbol{x}_{t})), \quad \forall \boldsymbol{x}_{t} \in \mathcal{C},\ \exists \boldsymbol{u}_t \in \mathcal{U}.
\label{eq:dh_condition}
\end{equation}
where $\Delta h(\boldsymbol{x}_{t},\boldsymbol{u}_{t}) = h(f(\boldsymbol{x}_t,\boldsymbol{u}_t)) - h(\boldsymbol{x}_{t})$. 
\end{definition}
 \begin{theorem}[Safety Guarantee~\cite{xiong2022discrete}]
\label{thm:forward-invariance}
Given a DCBF $h(\boldsymbol{x})$ from Def. \ref{def:DCBF} with the corresponding set $\mathcal{C}$ defined by \eqref{eq:safe-set}, if $\boldsymbol{x}_{0} \in \mathcal {C}$, then any Lipschitz controller $\boldsymbol{u}_{t}$ that satisfies the constraint in \eqref{eq:dh_condition}, $\forall t\ge 0$ renders $\mathcal {C}$ forward invariant for system \eqref{eq: sys}, $i.e., \boldsymbol{x}_{t} \in \mathcal {C}, \forall t\ge 0.$
\end{theorem}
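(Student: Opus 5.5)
The plan is an induction on the time step $t$, using the DCBF inequality \eqref{eq:dh_condition} together with the two properties assumed of $\alpha$: it is class $\mathcal{K}$, and $\alpha(r)<r$ for $r>0$. Lipschitz continuity of the controller is only needed so that the closed-loop map $\boldsymbol{x}\mapsto f(\boldsymbol{x},\boldsymbol{u}(\boldsymbol{x}))$ is well defined and continuous; it plays no role in the invariance argument.

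\emph{Base case.} $\boldsymbol{x}_0\in\mathcal{C}$ holds by hypothesis, that is, $h(\boldsymbol{x}_0)\ge 0$.

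\emph{Inductive step.} Assume $\boldsymbol{x}_t\in\mathcal{C}$, so $h(\boldsymbol{x}_t)\ge 0$. Because the applied input satisfies \eqref{eq:dh_condition} at step $t$, and $\boldsymbol{x}_{t+1}=f(\boldsymbol{x}_t,\boldsymbol{u}_t)$, we get
\begin{equation*}
h(\boldsymbol{x}_{t+1}) \ge h(\boldsymbol{x}_t)-\alpha\big(h(\boldsymbol{x}_t)\big).
\end{equation*}
Now split into two cases on the value of $h(\boldsymbol{x}_t)$.
\begin{itemize}
\item If $h(\boldsymbol{x}_t)=0$, then $\alpha(0)=0$ because $\alpha$ is class $\mathcal{K}$. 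Hence $h(\boldsymbol{x}_{t+1})\ge 0$.
\item If $h(\boldsymbol{x}_t)=r>0$, then $\alpha(r)<r$. Hence $h(\boldsymbol{x}_{t+1})>0$.
\end{itemize}
In both cases $\boldsymbol{x}_{t+1}\in\mathcal{C}$, and induction gives $\boldsymbol{x}_t\in\mathcal{C}$ for all $t\ge0$.

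The argument is short; the only delicate point is that \eqref{eq:dh_condition} is stated only for states in $\mathcal{C}$. The induction hypothesis is therefore needed to license applying the inequality at step $t$, and the order of the steps matters for that reason. Note also that the condition $\alpha(r)<r$ is what ensures $h-\alpha(h)\ge 0$. Without it, a general class $\mathcal{K}$ function could make the lower bound negative and the argument would fail.
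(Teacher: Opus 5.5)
Your induction is correct and complete. The inductive hypothesis $h(\boldsymbol{x}_t)\ge 0$ is exactly what licenses applying \eqref{eq:dh_condition} at step $t$, and $\alpha(0)=0$ together with $\alpha(r)<r$ then gives $h(\boldsymbol{x}_{t+1})\ge h(\boldsymbol{x}_t)-\alpha(h(\boldsymbol{x}_t))\ge 0$. The paper does not prove this statement itself but imports it from \cite{xiong2022discrete}; your one-step argument is the standard proof behind it, and it uses the same inequality chaining the paper applies for the exponential bound after \eqref{eq:simple-DCBF} and in the proof of Theorem~\ref{thm:safety-feasibility}.
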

We can simply define a DCBF $h(\boldsymbol{x})$ in \eqref{eq:dh_condition} satisfying
\begin{equation}
\label{eq:simple-DCBF}
\Delta h(\boldsymbol{x}_{t},\boldsymbol{u}_{t})
\ge -\gamma h(\boldsymbol{x}_{t}),
\end{equation}
where $\alpha(\cdot)$ is linear and $0<\gamma\le 1$. Rewriting \eqref{eq:simple-DCBF}, we have $h(f(\boldsymbol{x}_t,\boldsymbol{u}_t))\ge (1-\gamma)h(\boldsymbol{x}_{t})$. This shows 
the lower bound of $h$ decreases exponentially with the rate $1-\gamma$, and the DCBF turns into the discrete-time exponential
CBF \cite{agrawal2017discrete}.

\section{Problem Formulation}
\label{sec: problem and approach}
Our objective is to design a closed-loop control strategy for system \eqref{eq: sys} over the time interval $[0,T]$ that minimizes the deviation from a reference state, while satisfying safety requirements and state and input constraints.

\textbf{Safety Requirement:} System \eqref{eq: sys} should always satisfy a safety requirement of the form: 
\begin{equation}
\label{eq:Safety constraint}
h(\boldsymbol{x}_{t})\ge 0, ~\boldsymbol{x}_{t} \in \mathbb{R}^{n}, ~0\le t \le T,
\end{equation}
where $h:\mathbb{R}^{n}\to\mathbb{R}$. 

\textbf{Control and State Limitations:} The controller $\boldsymbol{u}_{t}$ and state $\boldsymbol{x}_{t}$ must satisfy the state and input constraints 
$\boldsymbol{x}_t\in\mathcal{X}\subset\mathbb{R}^n$, 
$\boldsymbol{u}_t\in\mathcal{U}\subset\mathbb{R}^q$ for $0\le t \le T.$

\textbf{Objective:} We consider the following cost:  
\begin{equation}
\label{eq:cost-function-1}
\begin{split}
J=
\left\|
\boldsymbol{x}_{t+N}-\boldsymbol{x}_{r}
\right\|_{P}^{2}
+
\sum_{k=0}^{N-1}
\Bigl(
\left\|
\boldsymbol{x}_{t+k}-\boldsymbol{x}_{r}
\right\|_{Q}^{2}
+\\
\left\|
\boldsymbol{u}_{t+k}-\boldsymbol{u}_{r}
\right\|_{R}^{2}
\Bigr),
\end{split}
\end{equation}
over a receding horizon $N<T$, where $\boldsymbol{x}_{t+k}$ and 
$\boldsymbol{u}_{t+k}$ denote the predicted state and input, 
$\boldsymbol{x}_{r}$ and $\boldsymbol{u}_{r}$ are the reference state and input, 
and $P$, $Q$, and $R$ are weighting matrices.

A control policy is said to be \emph{feasible} if there exists a control policy satisfying all state, input, and safety constraints for all $0\le t \le T$. In this paper, we consider the following problem:
\begin{problem}
\label{prob:Path-prob}
Find a feasible control policy for system \eqref{eq: sys} such that the safety requirement, control and state limitations are satisfied, and the cost \eqref{eq:cost-function-1} is minimized. 
\end{problem}
To enforce safety, the authors of \cite{zeng2021enhancing} incorporated the DCBF constraint
\begin{equation}
\label{eq: mpc-cbf-cons2}
\begin{split}
h(f(\boldsymbol{x}_{t+k},\boldsymbol{u}_{t+k})) \ge \omega_{t+k}(1-\gamma)h(\boldsymbol{x}_{t+ k}),\quad 0<\gamma\le1
\end{split}
\end{equation}
into the MPC formulation, where $\omega_{t+k}\ge 0$ is a relaxation variable introduced to improve both feasibility and safety. In addition, the system dynamics in \eqref{eq: sys} and the state and input constraints are included as constraints in the MPC problem to enforce additional requirements. 
However, since both \eqref{eq: sys} and \eqref{eq: mpc-cbf-cons2} are generally nonlinear, the resulting optimization problem becomes nonconvex. This approach, commonly referred to as NMPC-DCBF, may lead to high computational cost, particularly when long prediction horizons are considered. To transform the optimization problem into a convex one, the authors of \cite{liu2023iterative} linearize both the system dynamics \eqref{eq: sys} and the DCBF constraint \eqref{eq: mpc-cbf-cons2} as linear constraints within an MPC framework. The resulting linear MPC problem is solved iteratively at each time step to reduce the linearization error. However, the convergence of this iterative procedure is not guaranteed. To address this issue, we propose a Koopman-based MPC-DCBF framework to construct a linear MPC formulation without requiring iterative linearization.

\textbf{Approach:}
The key idea of this paper is to transform the nonlinear safety-constrained MPC problem into a linear MPC formulation using a Koopman-based lifted representation.
First, we augment the original system dynamics with the barrier function to construct a DCBF-augmented dynamical system that explicitly incorporates the safety constraint into the system state.
Second, the augmented nonlinear system is lifted into a higher-dimensional space through the Koopman operator framework, where the lifted dynamics can be approximated by a linear predictor identified from data.
This lifted representation enables both the system dynamics and the barrier function to admit linear expressions in the lifted state. To improve feasibility while preserving safety guarantees, a relaxation mechanism is introduced for the linearized DCBF constraints.
Finally, the resulting lifted linear model is integrated into an MPC formulation together with linearized DCBF constraints, which leads to a linear MPC problem that can be formulated as a QP solved efficiently at each control step.

\section{Koopman-based MPC-DCBF framework}
\label{sec: LMPC-DCBF}
In this section, we present the proposed Koopman-based MPC-DCBF framework, which enables the construction of a convex MPC problem for safety-critical systems. The framework consists of three main steps: constructing a DCBF-augmented dynamical system, lifting it to a higher-dimensional linear system via Koopman identification, and integrating the resulting lifted model into an MPC formulation.

\subsection{DCBF-augmented Dynamical System}
If only lifting system \eqref{eq: sys} to a higher and linear dimension, Problem \ref{prob:Path-prob} usually can not be transformed into a QP because of the nonlinear and possibly nonconvex safety constraint \eqref{eq:Safety constraint}. Even if $h$ is a $\ell_2$-norm DCBF, applying the DCBF constraint \eqref{eq:Safety constraint} to MPC results in a quadratically constrained QP (QCQP), which is computationally heavier than QP  in practice. 
Moreover, solving a QP not only has the same  $O(n^3)$ time complexity as solving a linear system of equations in theory \cite{11431115} but also offers data-independent execution time certificates \cite{10683964,11240592}, which is critical for real-time control applications.

To handle the general nonconvex DCBF constraint while maintaining computational efficiency, we incorporate the DCBF $h(\boldsymbol{x})$ as an additional state variable. Consider the discrete-time system \eqref{eq: sys}, where $h(\boldsymbol{x})$ denotes a DCBF encoding the safety constraint. The augmented state is defined as
\begin{equation}
\label{eq: aug state}
    \bar{\boldsymbol{x}}_t\coloneqq\left[\begin{array}{c}
         \boldsymbol{x}_t  \\
         h(\boldsymbol{x}_t) 
    \end{array} \right] \in\rr^{n+1},
\end{equation}
which is governed by the following DCBF-augmented dynamic system
\begin{equation}\label{eq: CBF_aug_sys}
    \bar{\boldsymbol{x}}_{t+1}=\bar{f}(\bar{\boldsymbol{x}}_t,\boldsymbol{u}_t)\coloneqq
    \left[\begin{array}{c}
         f(\boldsymbol{x}_t,\boldsymbol{u}_t)  \\
         h(f(\boldsymbol{x}_t,\boldsymbol{u}_t))
    \end{array} \right].
\end{equation}

Our idea is to apply the Koopman operator to lift the DCBF-augmented dynamical system \eqref{eq: CBF_aug_sys} into a linear system in a higher-dimensional space. This lifted representation enables both the system dynamics and the DCBF constraint to be expressed linearly in the lifted space, allowing the resulting MPC formulation to be cast as a QP.

\subsection{Koopman Identification via Least Squares}
\label{subec: Koopman Ide}
To obtain a linear representation of the nonlinear DCBF-augmented system \eqref{eq: CBF_aug_sys}, we employ the Koopman operator framework (EDMD in Sec. \ref{subsec: EDMD}) to lift the nonlinear system into a higher-dimensional space of observables where the dynamics can be approximated linearly.

Let $\psi(\bar{\boldsymbol{x}})$ denote a vector of observable functions of the augmented state $\bar{\boldsymbol{x}}$. 
The lifted state is initialized as
\begin{equation}\label{eq: lift state}
\psi_t = \psi(\bar{\boldsymbol{x}}_t) \in \mathbb{R}^{n_\psi},
\end{equation}
at the current time $t$, where $n_\psi$ is the dimension of the lifted space. 
Subsequent lifted states are propagated by an identified linear system
\begin{equation}
\label{eq: linear sys}
\psi_{t+1} = A \psi_t + B \boldsymbol{u}_t,
\end{equation}
where $A \in \mathbb{R}^{n_\psi \times n_\psi}$ and $B \in \mathbb{R}^{n_\psi \times q}$ are constant matrices to be identified. To compute $A$ and $B$, we collect data samples $\{(\bar{\boldsymbol{x}}_{t}^{j},\boldsymbol{u}_{t}^{j},\bar{\boldsymbol{x}}_{t+1}^{j})\}_{j=1}^{N_d}$
from the DCBF-augmented system \eqref{eq: CBF_aug_sys}, where $\bar{\boldsymbol{x}}_{t}^{j}$, $\boldsymbol{u}_{t}^{j}$, and $\bar{\boldsymbol{x}}_{t+1}^{j}$ denote the current augmented state, input, and next-step state, respectively. The corresponding lifted data are $\psi_t^j=\psi(\bar{\boldsymbol{x}}_{t}^{j})$, $\psi_{t+1}^j=\psi(\bar{\boldsymbol{x}}_{t+1}^{j})$. Stacking the data over $N_d$ samples yields
\begin{equation}
\Psi_{t+1} = A \Psi_{t} + B U_{t},
\end{equation} 
where $\Psi_{t} = [\psi_{t} ^1,\dots,\psi_{t} ^{N_d}]$, 
$\Psi_{t+1}= [\psi_{t+1}^1,\dots,\psi_{t+1}^{N_d}]$, 
$U_{t}= [\boldsymbol{u}_{t}^1,\dots,\boldsymbol{u}_{t}^{N_d}]$. Each sample $j$ corresponds to a (possibly different) time index $t$.

The matrices $A$ and $B$ are obtained by solving the following regularized least-squares problem:
\begin{equation}
\min_{A,B} \; \| \Psi_{t+1} - A\Psi_{t} - BU_{t} \|_F^2 + \lambda \|[A\;B]\|_F^2 ,
\end{equation}
where $\lambda>0$ is a regularization parameter and $\|\cdot\|_F$ denotes the Frobenius norm. This leads to the closed-form solution
\begin{equation}\label{eq: Mat A-B}
[A\; B] = \Psi_{t+1} V_{t}^\top (V_{t}V_{t}^\top + \lambda I)^{-1},
\end{equation}
where $V_{t} = [\Psi_{t}^\top \; U_{t}^\top]^\top$. The regularization term improves numerical stability and alleviates potential ill-conditioning of the data matrix.
Based on Eqn. \eqref{eq: problem_EDMD_C}, the optimal matrix $C$ admits the closed-form solution
\begin{equation}\label{eq: Mat C}
C = \bar{X}_{t} \Psi_{t}^\top (\Psi_{t} \Psi_{t}^\top)^{-1},
\end{equation}
where $\bar{X}_{t} = [\bar{\boldsymbol{x}}_{t}^1,\dots,\bar{\boldsymbol{x}}_{t}^{N_d}]$.
The resulting lifted linear model will be used in the next subsection to construct a linear MPC formulation.

\subsection{Koopman-based Linear MPC Formulation}
From Sec. \ref{subec: Koopman Ide}, the lifted linear dynamics obtained via the Koopman framework are given by Eq. \eqref{eq: linear sys}. Since the augmented state is defined by \eqref{eq: aug state}, the barrier value $h(\boldsymbol{x}_t)$ 
corresponds to the last component of $\bar{\boldsymbol{x}}_t$. 
Using the identified linear output map \eqref{eq: out_Koopman_linear}, \eqref{eq: lift state} and \eqref{eq: Mat C}, 
the barrier function can be expressed in the lifted space as
\begin{equation}
\label{eq: lifted bf}
h(\boldsymbol{x}_t)=e_h^\top\bar{\boldsymbol{x}}_t \approx e_h^\top C\psi_t \coloneqq \tilde{h}(\psi_t),
\end{equation}
where $e_h$ is the canonical basis vector selecting the last entry of $\bar{\boldsymbol{x}}_t$. 
Therefore, the barrier function is linear in the lifted state. The linear DCBF constraint defined by $\tilde{h}(\psi_t)$ can be formulated as a linear constraint to enforce the safety requirement. Together with the lifted linear dynamics \eqref{eq: linear sys} and the state and control bounds 
$\boldsymbol{x}_t\in\mathcal{X}\subset\mathbb{R}^n$ and 
$\boldsymbol{u}_t\in\mathcal{U}\subset\mathbb{R}^q$, 
the resulting optimization problem yields a linear MPC-DCBF formulation. 
Note that the state and control bounds may conflict with the system dynamics and DCBF constraints, which can lead to infeasibility. The authors of \cite{zeng2021enhancing} employed slack variables for DCBFs that enhance both feasibility and safety, as shown in \eqref{eq: mpc-cbf-cons2}. While this method enables the flexible integration of safety requirements as soft constraints within the control framework, the resulting constraints are nonlinear in the decision variables (see Remark \ref{rem:relaxation}). As a result, the optimization problem is generally nonconvex and inherently computationally intensive. To address this issue, we propose a new relaxation method that can generate linear DCBF constraints.

Consider a relaxed DCBF at current step $t$ as
\begin{equation}
\tilde{h}(\psi_{t+1})\geq \omega_{t}(1-\gamma) \tilde{h}(\psi_{t}),
\end{equation}
where $\omega_{t}\ge 0$ is a slack variable. Since $\tilde{h}(\psi_t)$ depends only on the current lifted state $\psi_t$, 
which is known at time $t$, it is treated as a known constant. To maintain linearity with respect to the decision variables, we propose
the following relaxation for the next prediction step:
\begin{equation}
\tilde{h}(\psi_{t+2})\geq \omega_{t+1}(1-\gamma)^{2} \tilde{h}(\psi_{t}), \ \omega_{t+1}\ge 0,
\end{equation}
which is a linear constraint in terms of $\psi_{t+2}$ and $\omega_{t+1}$. 
The same relaxation is imposed at each step over the prediction horizon. 
Having established the linearized DCBF, we proceed to link its construction to open-loop safety via the following theorem.
\begin{theorem}
\label{thm:safety-feasibility}
Given a linearized DCBF $\tilde{h}(\psi)$ for system \eqref{eq: linear sys}, if $\tilde{h}(\psi_{t})\ge 0$, then any Lipschitz controllers $\boldsymbol{u}_{t},...,\boldsymbol{u}_{t+N-1}$ that satisfy the constraints $\tilde{h}(\psi_{t+k+1})\geq \omega_{t+k}(1-\gamma)^{k+1} \tilde{h}(\psi_{t}), \ 0<\gamma \le 1, \ \omega_{t+k}\ge 0,\ k\in \{0,...,N-1\}$ render $\tilde{h}(\psi_{t+k+1})\ge 0$.
\end{theorem}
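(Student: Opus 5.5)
The argument is a direct sign argument applied to each prediction step separately. No induction over the horizon is needed, because every relaxed constraint is anchored at the same known value $\tilde{h}(\psi_t)$ rather than at the previous predicted value. This is exactly what the relaxation was designed for: it keeps the constraint linear and also makes the safety conclusion immediate.

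Fix an arbitrary $k\in\{0,\dots,N-1\}$. I will show that the right-hand side $\omega_{t+k}(1-\gamma)^{k+1}\tilde{h}(\psi_t)$ is nonnegative, which needs three observations.
\begin{itemize}
\item Since $0<\gamma\le 1$, we have $0\le 1-\gamma<1$, so $(1-\gamma)^{k+1}\ge 0$. The case $\gamma=1$ gives $0$, and this causes no trouble.
\item The slack variable satisfies $\omega_{t+k}\ge 0$ by constraint.
\item $\tilde{h}(\psi_t)\ge 0$ by hypothesis. Here $\psi_t=\psi(\bar{\boldsymbol{x}}_t)$ is the known current lifted state, so this is a fixed constant and not a decision variable.
\end{itemize}
The product of three nonnegative reals is nonnegative. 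Chaining this with the imposed constraint gives
\[
\tilde{h}(\psi_{t+k+1})\ \ge\ \omega_{t+k}(1-\gamma)^{k+1}\tilde{h}(\psi_t)\ \ge\ 0 ,
\]
which is the claim for step $k$. Since $k$ was arbitrary, the result holds over the whole horizon.

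I would add a remark on the role of the dynamics. The predicted lifted states $\psi_{t+k+1}$ are generated by \eqref{eq: linear sys} from $\psi_t$ and the inputs $\boldsymbol{u}_t,\dots,\boldsymbol{u}_{t+k}$. However, the conclusion only uses the inequality constraints, so the Lipschitz assumption on the controllers and the specific matrices $A,B,C$ play no role beyond making the $\psi_{t+k+1}$ well defined. This parallels Theorem~\ref{thm:forward-invariance}. The difference is that here there is no recursive propagation through $\tilde{h}(\psi_{t+k})$: that propagation is what would make the constraint bilinear in $\omega_{t+k}$ and $\psi_{t+k}$.

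The main obstacle is interpretive rather than technical. The statement guarantees $\tilde{h}\ge 0$ only for the open-loop Koopman predictor, not for the true $h(\boldsymbol{x}_{t+k+1})$. I would state explicitly that the guarantee is with respect to the linear surrogate \eqref{eq: lifted bf}, and that closed-loop safety of the true system additionally depends on the approximation error $h(\boldsymbol{x})-e_h^\top C\psi$, which is outside the scope of this theorem.
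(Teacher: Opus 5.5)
Your proposal is correct and follows the paper's proof: for each $k$, the factors of $\omega_{t+k}(1-\gamma)^{k+1}\tilde{h}(\psi_t)$ are all nonnegative, and this is chained with the imposed constraint. Your closing caveat, that the guarantee applies only to the lifted surrogate $\tilde{h}$ and not to the true $h$, agrees with the paper's Remark~\ref{rem: conservative DCBF}.
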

\begin{proof}
Since $\tilde{h}(\psi_t)\ge 0$, $\omega_{t+k}\ge 0$, and
$0<\gamma\le 1$, we have
$\omega_{t+k}(1-\gamma)^{k+1}\tilde{h}(\psi_t)\ge 0,
\ k\in\{0,\ldots,N-1\}$.
Therefore, from the imposed constraint,
$\tilde{h}(\psi_{t+k+1})
\ge
\omega_{t+k}(1-\gamma)^{k+1}\tilde{h}(\psi_t)
\ge 0$,
for all $k\in\{0,\ldots,N-1\}$.
\end{proof}
From the above proof, we observe that although a slack variable $\omega_{t+k}$ is introduced into the constraint, the open-loop safety (i.e., $\tilde{h}(\psi_{t+k+1})\ge 0,\ k\in \{0,...,N-1\}$) is still guaranteed. Moreover, when no control input satisfies $\tilde{h}(\psi_{t+k+1})\geq \omega_{t+k}(1-\gamma)^{k+1}\tilde{h}(\psi_t)$ for $\omega_{t+k}\geq 1$, the optimizer may select $\omega_{t+k}\in[0,1)$, thereby relaxing the right-hand side of the constraint to improve feasibility. This relaxation method enables us to incorporate DCBF
constraints in a linear MPC (LMPC) formulation:

\noindent\rule{\columnwidth}{0.4pt}
\textbf{LMPC-DCBF:}
{\small
\begin{subequations}\label{eq: LMPC_DCBF}
\begin{align}
\min_{U_{t},\Psi_{t},\Omega_{t}} \ &p(\psi_{t+N})+\sum_{k=0}^{N-1} q(\psi_{t+k},\boldsymbol{u}_{t+k},\omega_{t+k}) \label{eq: cost2}\\
\text{s.t.} ~~
 \psi&_{t+k+1} =  A \psi_{t+k} + B\boldsymbol{u}_{t+k}, \ k=0,\dots,N-1, \label{eq: dynamic_constraint2}  \\
 E_x C&\psi_{t+k} \in \mathcal{X}, \ k=1,\dots,N, \label{eq: state_constraint2} \\
 \boldsymbol{u}_{t+k} &\in \mathcal{U}, \ k=0,\dots,N-1 ,\label{eq: input_constraint2} \\
 \tilde{h}(\psi&_{t+k+1})\geq \omega_{t+k}(1-\gamma)^{k+1} \tilde{h}(\psi_{t}), \ 0<\gamma \le 1, \label{eq: cbf_constraint2}\\ \omega_{t+k}&\ge 0, \  k=0,\dots,N-1, 
\end{align}
\end{subequations}
}
\noindent\rule{\columnwidth}{0.4pt}
where $U_{t} = [\boldsymbol{u}_{t}, \dots, \boldsymbol{u}_{t+N-1}]$, $\Psi_{t} = [\psi_{t}, \dots, \psi_{t+N}]$ and $\Omega_{t} = [\omega_{t}, \dots, \omega_{t+N-1}]$. $E_x=[\mathcal{I}_n\;\mathbf{0}]$ is a selection matrix that extracts the original state 
$\boldsymbol{x}_t$ from the augmented state $\bar{\boldsymbol{x}}_t$. Thus, $\psi_{t+k}$ in the cost function \eqref{eq: cost2} can be converted to 
$\boldsymbol{x}_{t+k}\approx E_x C\psi_{t+k}$, enabling the cost to penalize the 
deviation from the desired state $\boldsymbol{x}_r$ in \eqref{eq:cost-function-1}.
In practice, \( \omega_{t+k} \) is treated as an optimization variable that is penalized toward 1---either through a large weight in the cost function or via a lexicographic priority. As a result, when the hard DCBF constraint is feasible, the optimizer yields \( \omega^{\star}_{t+k} \approx 1 \) (i.e., no relaxation). Otherwise, it computes a minimally relaxed solution that still ensures \( \tilde{h}(\psi_{t+k+1}) \ge 0 \).

Based on \cite[Def. 1]{liu2023iterative}, if $\tilde{h}(\psi_{t})$ has relative degree $m$ and the constraints in an optimal control problem consist solely of DCBF constraints \eqref{eq: cbf_constraint2}, then the DCBFs must be formulated up to the \( m^\text{th} \) order to ensure that the control input \( \boldsymbol{u}_{t} \) appears explicitly, i.e., the control input \( \boldsymbol{u}_{t} \) will appears in $\tilde{h}(\psi_{t+m},\boldsymbol{u}_{t})$. However, if the framework also includes the system dynamics \eqref{eq: dynamic_constraint2} as constraints---where \( \boldsymbol{u}_{t} \) is already explicitly represented---then the order of the DCBFs can be chosen more flexibly to reduce computational complexity. In particular, if \( m \le N \), the DCBF order can be set to 1, since the control input will affect the DCBF constraints within the prediction horizon. In our LMPC-DCBF framework \eqref{eq: LMPC_DCBF}, since the barrier function is represented in the lifted space by \eqref{eq: lifted bf},
its relative degree with respect to the lifted linear system \eqref{eq: dynamic_constraint2} is determined by the smallest integer $m$ such that
\begin{equation}
e_h^\top C A^{m-1} B \neq \mathbf{0},
\end{equation}
while $e_h^\top C A^i B = \mathbf{0},~ i=0,\dots,m-2$. Thus, when identifying $A$, $B$, and $C$ in \eqref{eq: Mat A-B}–\eqref{eq: Mat C}, the relative degree $m$ of the barrier function with respect to the lifted system cannot be predetermined while preserving the accuracy of the lifted dynamics. Therefore, the prediction horizon $N$ should be chosen sufficiently large to ensure $m \le N$.

\begin{remark}
\label{rem: conservative DCBF}
The Koopman-based lifted model in \eqref{eq: dynamic_constraint2} is obtained from a finite-dimensional approximation of the Koopman operator using data-driven identification. As a result, the lifted linear dynamics only approximate the original nonlinear system and modeling errors may exist. Therefore, the safety guarantees derived from the linear MPC-DCBF formulation cannot strictly ensure safety of the original nonlinear system in a rigorous sense. 
In practice, this issue can be mitigated through several conservative design choices. For example, a safety margin can be incorporated into the barrier function, i.e., enforcing $\tilde{h}(\psi_{t}) \ge \epsilon$ with $\epsilon>0$, to compensate for approximation errors. In addition, the class $\mathcal{K}$ function hyperparameter $\gamma$ in the DCBF constraint can be adjusted. Choosing a smaller $\gamma$ leads to more conservative constraints and strengthens the safety enforcement. In this paper, we adopt the latter strategy and regulate the safety level by tuning $\gamma$, as illustrated in the numerical examples in Sec.~\ref{sec: case studies}.
\end{remark}
\begin{remark}
\label{rem:relaxation}
In \cite{zeng2021enhancing}, the relaxation technique places the slack variable 
$\omega_{t+k}$ in front of $(1-\gamma)h(\boldsymbol{x}_{t+k})$, yielding
\begin{equation}
\label{eq: mpc-cbf-cons22}
\begin{split}
h(f(\boldsymbol{x}_{t+k},\boldsymbol{u}_{t+k})) \ge \omega_{t+k}(1-\gamma)h(\boldsymbol{x}_{t+ k}), \omega_{t+k}\ge0.
\end{split}
\end{equation}
However, $\omega_{t+k}$ is a decision variable, and $\boldsymbol{x}_{t+k}$ is also a decision variable for $k \ge 1$ in the MPC problem. As a result, the product $\omega_{t+k}h(\boldsymbol{x}_{t+k})$ introduces a nonlinear constraint, which leads to a nonconvex optimization problem.

To address this issue, in Theorem~\ref{thm:safety-feasibility} we establish a recursive relationship that connects the future barrier values $\tilde{h}(\psi_{t+k+1})$ to the known constant $\tilde{h}(\psi_t)$. 
By multiplying the slack variable $\omega_{t+k}$ with this known quantity, we obtain the constraint in \eqref{eq: cbf_constraint2}, which is linear with respect to the decision variables.
\end{remark}
\subsection{Complexity Analysis}
When $p(\cdot)$ and $q(\cdot)$ in \eqref{eq: cost2} are chosen as quadratic cost functions and all constraints in \eqref{eq: dynamic_constraint2}--\eqref{eq: cbf_constraint2} are affine in the decision variables, the resulting optimization problem is a convex QP at each time step.
The decision variables consist of the predicted control inputs 
$U_t = [\boldsymbol{u}_t, \dots, \boldsymbol{u}_{t+N-1}]$, lifted states 
$\Psi_t = [\psi_t, \dots, \psi_{t+N}]$, and relaxation variables 
$\Omega_t = [\omega_t, \dots, \omega_{t+N-1}]$. 
Since the lifted state dimension $n_\psi$, the input dimension $q$, and the 
prediction horizon $N$ have been defined previously, the number of decision 
variables in the LMPC-DCBF problem \eqref{eq: LMPC_DCBF} scales on the order of $ O\!\left(N(n_\psi + q)\right)$.
Here we omit the contribution of the relaxation variables $\omega_{t+k}$ 
since each of them is a scalar and thus does not affect the asymptotic complexity order.
Using standard interior-point methods, the worst-case computational complexity scales as $
O\!\left((N(n_\psi + q))^3\right)$.
In practice, structure-exploiting QP solvers (e.g., OSQP) typically achieve significantly faster performance.

Compared with \cite{zeng2021enhancing}, \cite{liu2023iterative}, the proposed Koopman-based LMPC-DCBF framework maintains a convex QP structure and avoids iterative linearization steps within each control update, which substantially reduces the computational burden and facilitates real-time implementation.

\section{Numerical Examples}
\label{sec: case studies}
In this section, we present numerical results to validate our proposed approach using a unicycle model. We compare the performance of the proposed method with the baseline NMPC-DCBF and iMPC-DCBF approaches.  
For the proposed method, the QPs were solved using \texttt{quadprog}. 
For iMPC-DCBF, OSQP~\cite{stellato2020osqp} was used to solve the convex optimization problem at each iteration. 
NMPC-DCBF was solved using IPOPT~\cite{biegler2009large} with the modeling language YALMIP~\cite{lofberg2004yalmip}. 
All simulations were conducted on a Windows desktop with an Intel® Core™ i7-11750F CPU @ 2.50 GHz.

\subsection{System Dynamics and DCBF}
Consider a continuous-time unicycle model in the form
\begin{equation}
\label{eq:unicycle-model}
\begin{bmatrix}
\dot{x}(t) \\
\dot{y}(t) \\
\dot{\theta}(t)\\
\dot{v}(t)
\end{bmatrix}  
=\begin{bmatrix}
 v(t)\cos{(\theta(t))}  \\
 v(t)\sin{(\theta(t))} \\
 0 \\
 0
\end{bmatrix} 
+ \begin{bmatrix}
  0 & 0\\
  0 & 0\\
  1 & 0\\
  0 & 1 
\end{bmatrix}\begin{bmatrix}
   u_{1}(t)   \\
  u_{2}(t) 
\end{bmatrix},
\end{equation}
where $\boldsymbol{x}(t)=[x(t),y(t),\theta(t),v(t)]^{\top}$ captures the 2-D location, heading angle, and linear speed; $\boldsymbol{u}(t)=[u_1(t),u_{2}(t)]^{\top}$ represents angular velocity ($u_{1}$) and linear acceleration ($u_{2}$), respectively. The discrete-time model used in this paper is obtained by discretizing \eqref{eq:unicycle-model} using a standard fourth-order Runge--Kutta scheme with
sampling time $\Delta t$, yielding the discrete-time model in \eqref{eq: sys}. 

As a candidate DCBF function $h(\boldsymbol{x}_{t})$, we choose a quadratic distance function for circular obstacle avoidance $h(\boldsymbol{x}_{t})= (x_{t}-x_{0})^{2}+(y_{t}-y_{0})^{2}-r^{2}$, where $(x_{0},y_{0})$ and $r$ denote the obstacle center location and radius, respectively. In this work, we set $x_{0}=0$, $y_{0}=0$, and $r=1$.

\subsection{Koopman Identification}
\label{subsec: Koopman model}
To identify the lifted linear predictor, we generate state-transition
data $(\boldsymbol{x}_t,\boldsymbol{u}_t,\boldsymbol{x}_{t+1})$ using the
discrete-time model. Random initial states and control inputs are
sampled within the admissible ranges to construct the dataset used
for regression.
To handle the periodicity of the heading angle, the lifted state
does not directly include $\theta_t$. Instead, we use the
trigonometric coordinates $(\sin\theta_t,\cos\theta_t)$.
The lifted state includes the physical states
$[x_t,\,y_t,\,v_t,\,\sin\theta_t,\,\cos\theta_t]^\top$,
the dynamics-consistent terms
$[v_t\cos\theta_t,\,v_t\sin\theta_t]^\top$,
and the candidate DCBF function $h(\boldsymbol{x}_t)$.
To further improve approximation accuracy, 200 Gaussian radial basis
functions (RBFs) \cite{williams2015data} are included, resulting in the
lifted state $\psi_t$ with dimension $n_\psi=208$.

Using the collected dataset and the EDMD formulation introduced in
Sec.~\ref{subec: Koopman Ide}, we identify the lifted linear dynamics
in \eqref{eq: linear sys}. 
The dataset contains $N_d=2\times10^5$ transition samples generated
from randomly sampled initial states and inputs within the ranges
$x,y\in[-3,3]$, $\theta\in[-\pi,\pi]$, $v\in[-3,3]$, and
$u_1,u_2\in[-3,3]$.
The physical states are reconstructed from the leading lifted states,
while the heading angle is recovered from $(\sin\theta_t,\cos\theta_t)$
using $\mathrm{atan2}$.

Fig.~1 illustrates the prediction performance of the identified Koopman model. 
The initial state is set to $\boldsymbol{x}_0=[-3,-3,0,0.2]^{\top}$. 
To distinguish it from the true state $\boldsymbol{x}_t$ generated by the nonlinear dynamics, 
we denote by $\hat{\boldsymbol{x}}_t=E_x C\psi_{t}$ the state reconstructed from the Koopman predictor. 
Figs.~\ref{subfig:1}–\ref{subfig:2} show the trajectories obtained from the true system and the Koopman model 
under sampling times $\Delta t=0.1$s and $\Delta t=0.01$s, respectively. 
Figs.~\ref{subfig:3}–\ref{subfig:4} compare the obstacle function values, where 
$h(\boldsymbol{x}_t)$ denotes the true barrier function evaluated at the true state (nonlinear), 
$h(\hat{\boldsymbol{x}}_t)$ is computed using the reconstructed Koopman state (nonlinear), and 
$\tilde{h}(\psi_t)$ is obtained directly from the lifted state (linear). 
Fig.~\ref{subfig:5} shows the prediction error $e_t=\|\boldsymbol{x}_t-\hat{\boldsymbol{x}}_t\|_2$, which measures the discrepancy between the true state $\boldsymbol{x}_t$ 
and the state reconstructed from the Koopman lifted representation 
via $\hat{\boldsymbol{x}}_t=E_x C \psi_t$.
It can be observed that the Koopman predictor closely matches the true system behavior, 
and the approximation accuracy improves as the sampling time $\Delta t$ decreases.
\begin{figure*}[!t]
    \vspace*{0.2cm}
    \centering
    \begin{subfigure}[t]{0.19\linewidth}
        \centering
    \includegraphics[width=\linewidth]{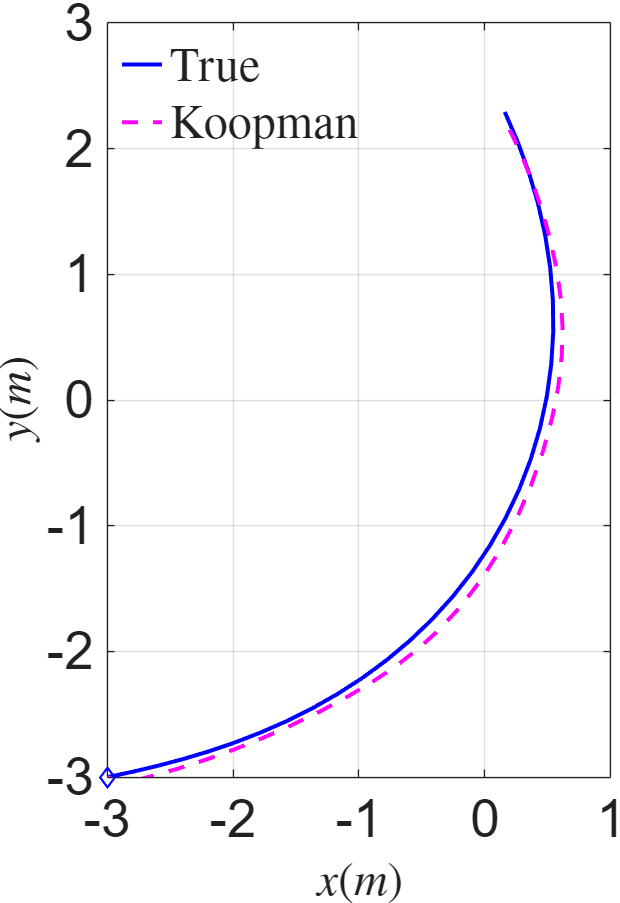}
        \caption{Trajectory $\Delta t=0.1s$}
        \label{subfig:1}
    \end{subfigure}    
    \begin{subfigure}[t]{0.19\linewidth}
        \centering
    \includegraphics[width=\linewidth]{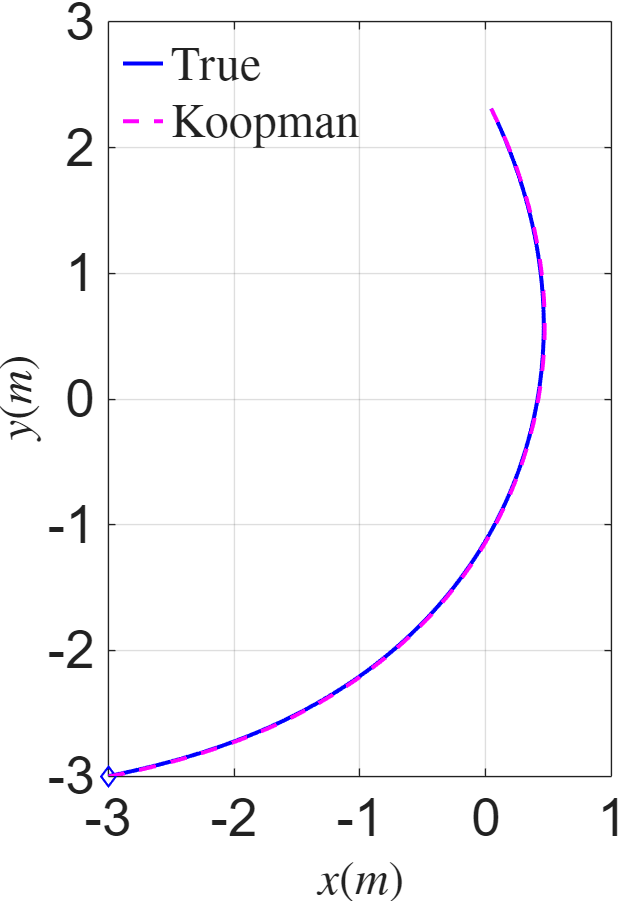}
        \caption{Trajectory $\Delta t=0.01s$}
        \label{subfig:2}
    \end{subfigure}     
    \begin{subfigure}[t]{0.173\linewidth}
        \centering
    \includegraphics[width=\linewidth]{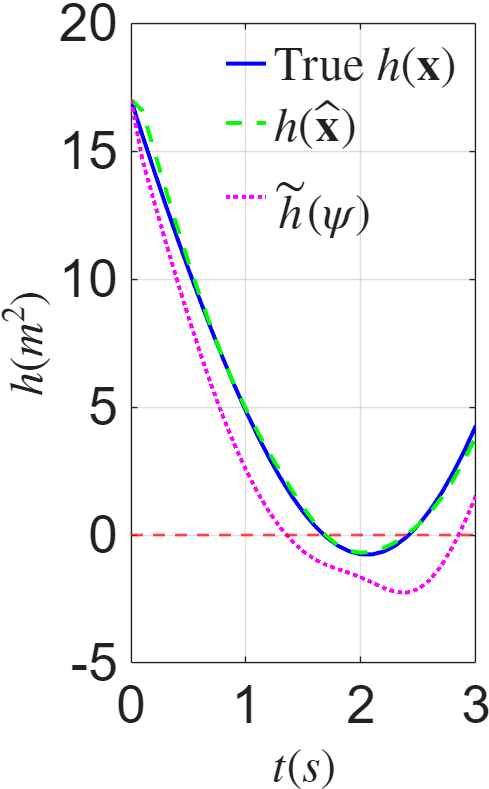}
        \caption{DCBFs $\Delta t=0.1s$}
        \label{subfig:3}
    \end{subfigure}  
        \begin{subfigure}[t]{0.173\linewidth}
        \centering
      \includegraphics[width=\linewidth]{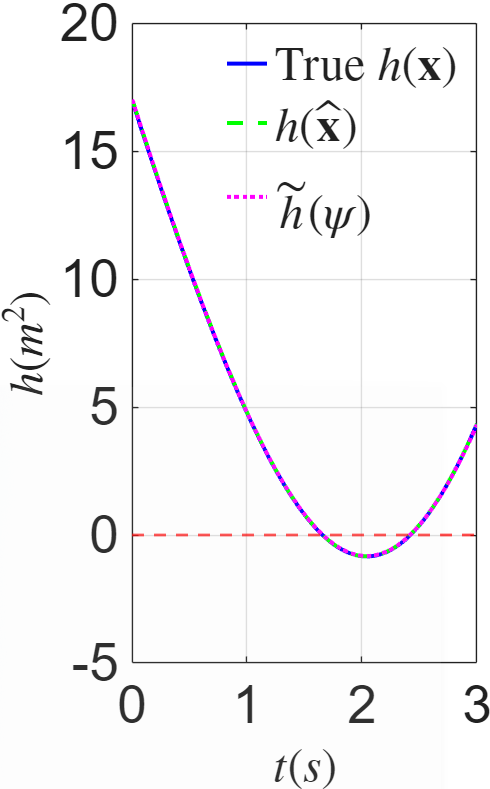}
        \caption{DCBFs $\Delta t=0.01s$}
        \label{subfig:4}
    \end{subfigure}
        \begin{subfigure}[t]{0.2\linewidth}
        \centering
      \includegraphics[width=\linewidth]{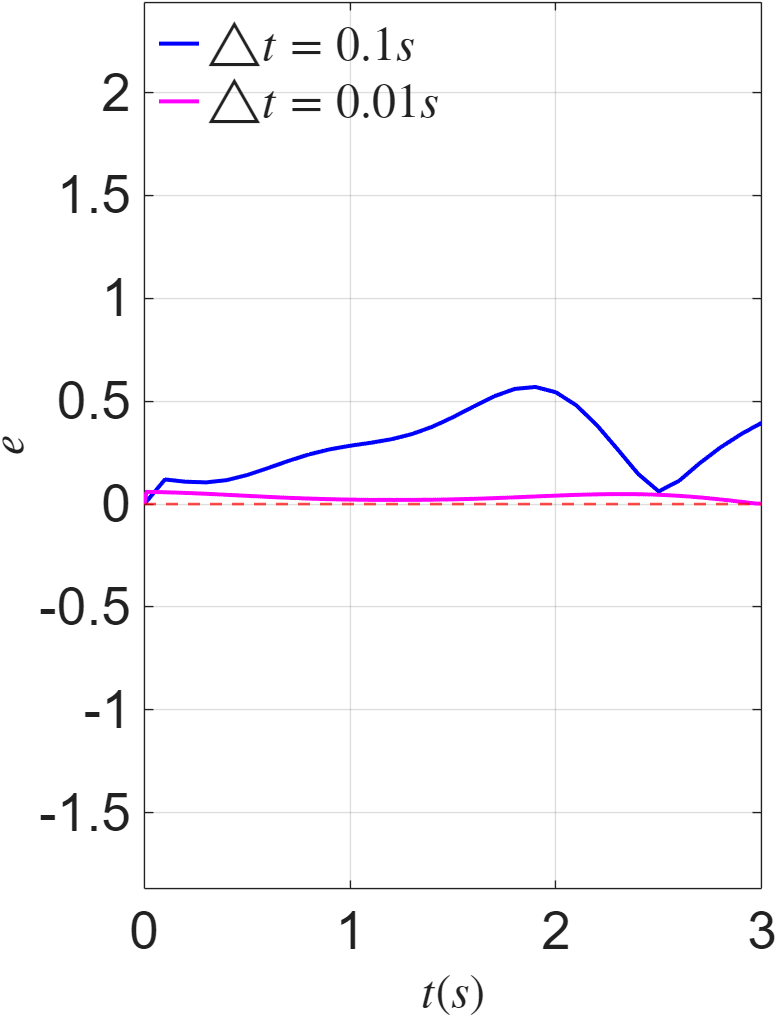}
        \caption{Error}
        \label{subfig:5}
    \end{subfigure}
    \caption{Comparison between the true nonlinear dynamics and the identified Koopman predictor. 
(a), (b): $x$--$y$ trajectories generated by the true system and the Koopman model. 
(c), (d): evolution of the DCBFs. 
(e): prediction error $e_t=\|\boldsymbol{x}_t-\hat{\boldsymbol{x}}_t\|_2$.}
    \label{fig: Prediction}
\end{figure*}

\subsection{MPC Design}
The cost function of the MPC problem consists of the stage cost
$q(\psi_{t+k},\boldsymbol{u}_{t+k},\omega_{t+k})
=\|C_y\psi_{t+k}-\boldsymbol{y}_{r}\|_Q^2+
\|\boldsymbol{u}_{t+k}-\boldsymbol{u}_{r}\|_R^2+
\|\omega_{t+k}-\omega_{r}\|_S^2$
and the terminal cost
$p(\psi_{t+N})=
\|C_y\psi_{t+N}-\boldsymbol{y}_{r}\|_P^2$,
where
$C_y\psi_t=
[x_t,y_t,\sin\theta_t,\cos\theta_t,v_t]^\top$ and
$\boldsymbol{y}_{r}=
[x_r,y_r,\sin\theta_r,\cos\theta_r,v_r]^\top$. The weighting matrices are
$Q=P=\mathrm{diag}(500,500,80,80,10)$,
$R=\mathrm{diag}(0.3,0.5)$, and $S=1000$.
The initial and reference states are
$\boldsymbol{x}_{0}=[-2.5,-2,0.2,2]^\top$
and
$\boldsymbol{x}_{r}=[2,2.5,0,0.3]^\top$, respectively.
The other references are
$\boldsymbol{u}_{r}=[0,0]^\top$ and $\omega_r=1$.
System \eqref{eq:unicycle-model} is discretized with
$\Delta t=0.01$ and is subject to the following state and input constraints:
\begin{equation}
\begin{split}
\label{eq:state-input-constraint}
\mathcal{X}&=\{\boldsymbol{x}_{t}\in \mathbb{R}^{4}: -3\cdot \mathcal{I}_{4\times1} \le \boldsymbol{x}_{t}\le 3\cdot \mathcal{I}_{4\times1}\},\\
\mathcal{U}&=\{\boldsymbol{u}_{t}\in \mathbb{R}^{2}: -3\cdot \mathcal{I}_{2\times1} \le \boldsymbol{u}_{t}\le 3\cdot \mathcal{I}_{2\times1}\}.
\end{split}
\end{equation}

We incorporate the Koopman model identified in Sec.~\ref{subsec: Koopman model} into the MPC framework to solve the optimization problem in \eqref{eq: LMPC_DCBF}. The resulting controller is referred to as K-LMPC-DCBF. If the DCBF constraint \eqref{eq: cbf_constraint2} is removed from the optimization, the controller reduces to K-LMPC. As benchmarks, we compare with iMPC-DCBF~\cite{liu2023iterative}, where the maximum iteration number is set to $j_{\max}=1000$, and NMPC-DCBF~\cite{zeng2021enhancing}. To ensure a fair comparison, the parameters of all methods are kept identical.

As shown in Fig.~\ref{fig: 6}, without the DCBF constraint, K-LMPC enters the obstacle region and therefore violates safety. In contrast, K-LMPC-DCBF is able to enforce safety during the motion. It can also be observed that as the prediction horizon $N$ increases, the trajectory of K-LMPC-DCBF approaches the obstacle boundary. This occurs because a longer prediction horizon allows the optimizer to choose a more path-optimal control strategy.

For the case $N=32$ and $\gamma=0.4$, K-LMPC-DCBF slightly penetrates the obstacle region. This is because the controller attempts to optimize the path while the Koopman-lifted dynamics and DCBF constraints are only approximations of the true system, which introduces modeling errors. However, when $\gamma$ is reduced to $0.2$, a more conservative DCBF condition is enforced and the trajectory becomes safe again, which is consistent with the analysis in Remark~\ref{rem: conservative DCBF}.

Both iMPC-DCBF and NMPC-DCBF can guarantee safety for the case $N=32$, $\gamma=0.4$. However, the computational advantage of K-LMPC-DCBF is significant. The average control computation time of K-LMPC-DCBF is about $3$--$5$ ms per step, while iMPC-DCBF requires approximately $100$--$330$ ms, which is about $30$--$60$ times slower. NMPC-DCBF is the slowest, taking about $10500$--$16700$ ms per step (around $3500$ times slower). These results demonstrate the substantial computational efficiency of the proposed K-LMPC-DCBF framework.
\begin{figure}
\vspace{3mm}
    \centering
    \includegraphics[scale=0.22]{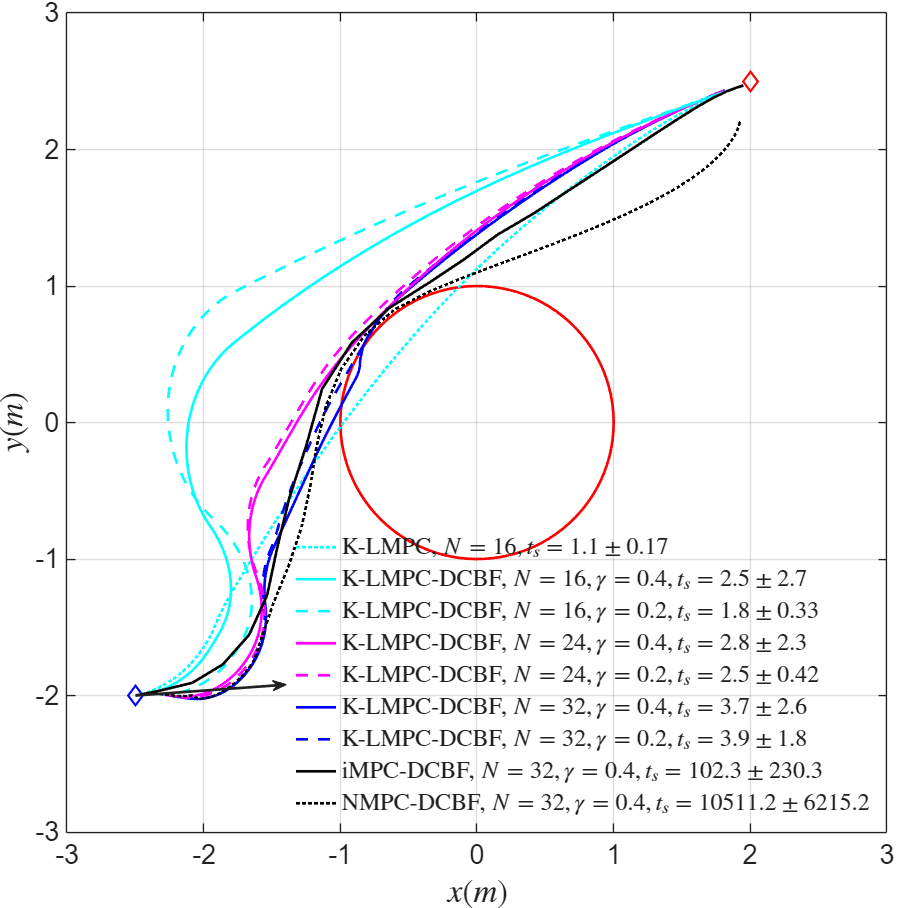}
    \caption{Trajectory comparison in the presence of a circular obstacle.
The robot moves from the start (blue diamond) to the goal (red diamond).
The black arrow indicates the initial velocity direction of the robot.
Trajectories generated by the proposed K-LMPC-DCBF with different
prediction horizons $N$ and hyperparameters $\gamma$ are compared with
iMPC-DCBF and NMPC-DCBF. The legend also reports the average computation
time per step $t_s$ (ms).}
    \label{fig: 6}
\end{figure}

To better visualize safety, Fig.~\ref{fig: 7} shows the DCBF value over time for all methods. 
For K-LMPC-DCBF, when $\gamma$ is fixed, increasing the prediction horizon $N$ decreases the minimum value of $h$, bringing the trajectory closer to the obstacle boundary. 
Conversely, for fixed $N$, a smaller $\gamma$ results in a larger minimum value of $h$, indicating a more conservative and safer behavior. 
This again demonstrates the safety guarantee provided by K-LMPC-DCBF.
\begin{figure}
\vspace{3mm}
    \centering
    \includegraphics[scale=0.225]{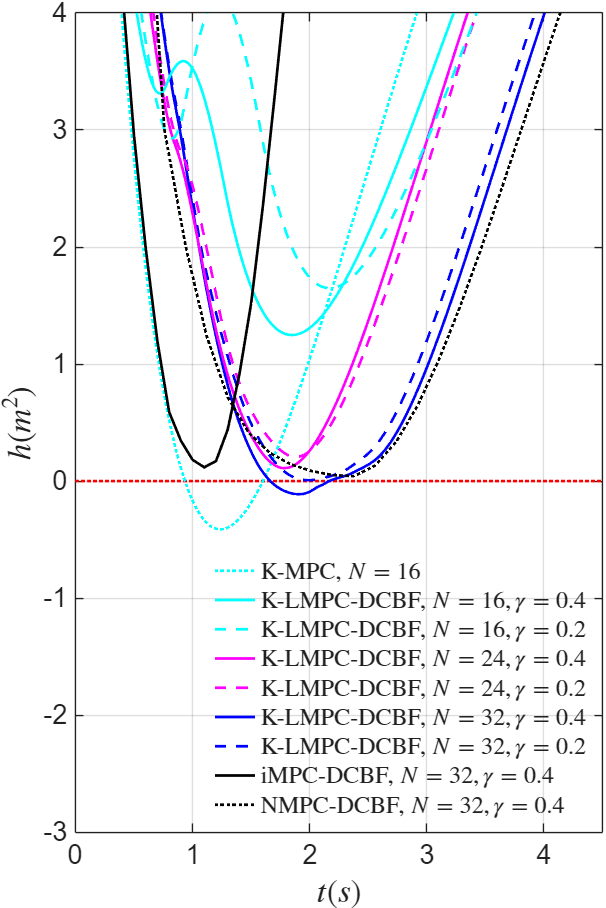}
\caption{Evolution of the DCBFs under different control strategies. 
The red dashed line indicates the safety boundary.}
    \label{fig: 7}
\end{figure}

Fig.~\ref{fig: input} shows the control inputs generated by K-LMPC-DCBF under different hyperparameter settings. 
The inputs remain within the prescribed bounds throughout the motion, confirming that the control constraints are respected. 
At the beginning of the maneuver, relatively large control actions are required to steer the robot away from the obstacle. 
Some nonsmooth variations can be observed during this stage when the safety constraints become active and the inputs approach their bounds, which is typical for constrained MPC-based safety controllers. 
As the robot moves away from the obstacle and approaches the goal, the control inputs gradually become smoother and converge toward zero. 
\begin{figure}[!t]
    \vspace*{0.2cm}
    \centering
        \begin{subfigure}[t]{0.48\linewidth}
        \centering
      \includegraphics[width=\linewidth]{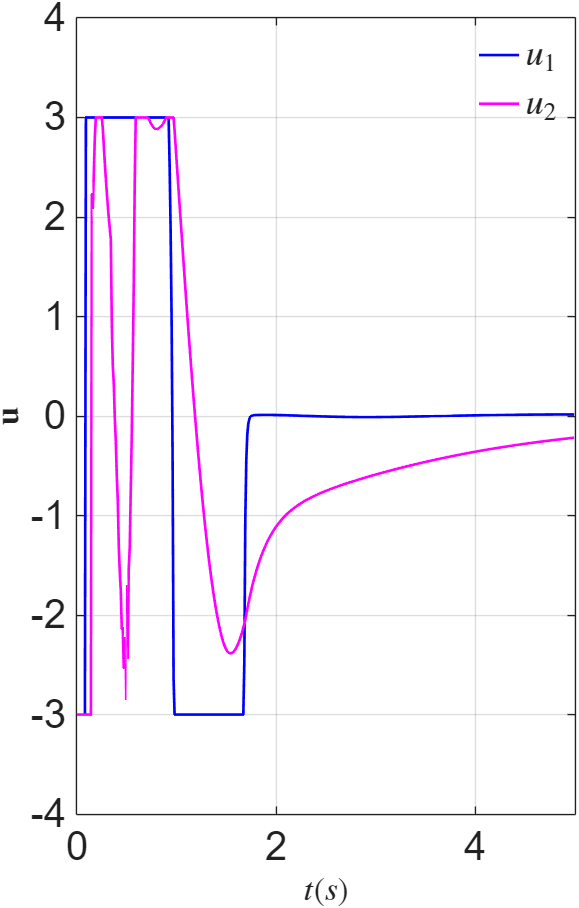}
        \caption{$N=16, \gamma=0.2$}
        \label{subfig:8}
    \end{subfigure}
        \begin{subfigure}[t]{0.48\linewidth}
        \centering
      \includegraphics[width=\linewidth]{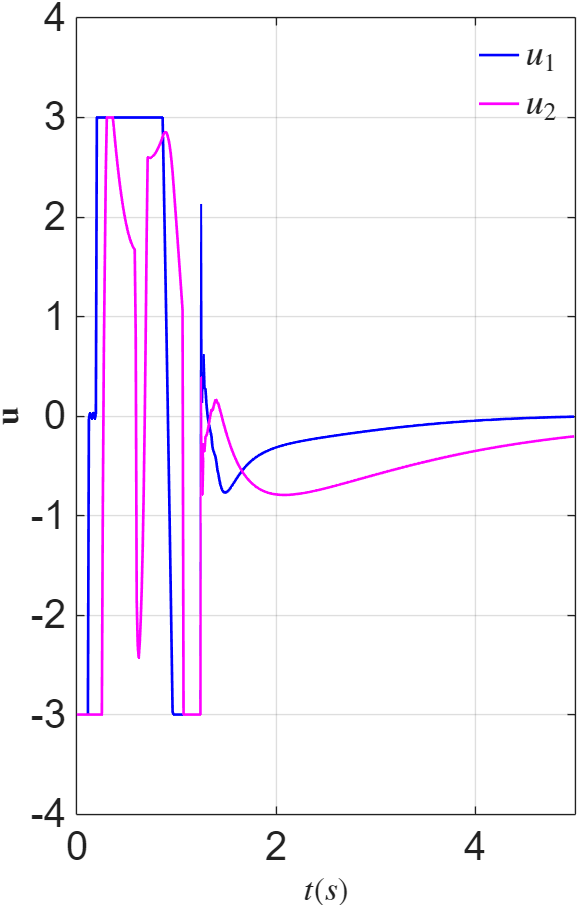}
        \caption{$N=24, \gamma=0.2$}
        \label{subfig:9}
    \end{subfigure}
    \caption{Control inputs $u_1$ and $u_2$ over time for K-LMPC-DCBF under different hyperparameter settings.}
    \label{fig: input}
\end{figure}

\section{Conclusion}
\label{sec:conclusion}
This paper proposed a Koopman-based MPC-DCBF framework for safety-critical control of nonlinear discrete-time systems. 
By incorporating the barrier function into the lifted state and identifying a linear predictor using Koopman operator theory, the nonlinear safety-constrained MPC problem can be transformed into a QP. This formulation enables efficient real-time computation, while preserving safety guarantees through DCBF constraints. 

Numerical simulations on a mobile robot navigation task demonstrated that the proposed method achieves safe trajectory generation and significantly reduces computational cost compared with NMPC-DCBF and iterative MPC-DCBF approaches.
Future work will focus on extending the framework to more complex environments, addressing model uncertainty, and applying the method to hardware experiments and multi-agent systems.

\bibliographystyle{IEEEtran}
\bibliography{ref} 
\end{document}